\newtheorem{theorem}{Theorem}
\newcommand{\omt}[1]{}
\begin{document}

%\title{Stability Bounds in  Wireless Networks}
\title{Bounds on Stability and Latency\\ in Wireless Communication}
\author{%
Vicent Cholvi\thanks{%
Department of Computer Science, Universitat Jaume~I, Castell\'on, Spain.}
%~\;and  
and
Dariusz R. Kowalski\thanks{%
Department of Computer Science, University of Liverpool, UK.}
}

\date{}

\maketitle

\begin{abstract}
In this paper, we study stability and latency of routing in wireless networks where it is assumed that no collision will occur.  Our approach is inspired by the adversarial queuing theory, which is amended in order to model wireless communication. More precisely, there is an adversary that specifies transmission rates of wireless links and injects data in such a way that an average number of data injected in a single round and routed through a single
wireless link is at most $r$, for a given $r\in (0,1)$. We also assume that the additional ``burst'' of data injected during any time interval and scheduled via a single link is bounded by a given parameter $b$.

Under this scenario, we show that the nodes following so called {\em work-conserving} scheduling policies, not necessarily 
the same, are guaranteed stability (i.e., bounded queues) and reasonably small data latency (i.e., bounded time on data delivery), 
for injection rates $r<1/d$, where $d$ is the maximum length of a routing path. Furthermore, we also show that such a bound is asymptotically optimal on $d$.
\end{abstract}

%\begin{keywords}

%\end{keywords}

%

\section{Introduction}

In this paper, we consider a multihop wireless network where data is transmitted from its source node to its destination node through other intermediate nodes.
%wireless nodes as intermediate hops. 

One crucial issue to characterize the performance of a networks is that of \emph{stability}. Roughly speaking, a communication network system is said to be stable if data waiting to be delivered (backlog) is finitely bounded at any 
single
%one 
time. The importance of such an issue is obvious, since if one cannot guarantee stability, then one cannot hope 
for ensuring
%to be able to ensure 
deterministic guarantees for most of the network performance metrics.
One such metric is {\em latency}, defined as the maximum time
for delivering data from its source to its destination, taken over all data
occurring in the routing process.

Whereas in the last few years much of the analysis of worst-case behavior of multihop wireline networks and scheduling policies has been performed using \emph{adversarial} models,
which
%that 
try to create as much trouble for the scheduling algorithm as possible~\cite{aafkll01,Ech:Uni}, 
only a few papers have been focussed on wireless networks. In~\cite{DBLP:journals/join/BorodinOR04}, Borodin et al. considered a model in which each node can transmit, at each time step, to all its neighbors, and show that the Nearest-to-Go scheduling policy is stable. They also showed that the Longest-in-System policy is unstable. Andrews et al.~\cite{DBLP:journals/talg/AndrewsZ07}, in a model in which a node can transmit to only one neighbor at 
a
%each 
time step, 
provided some fully distributed scheduling algorithms that ensure network stability, both when the routes are specified by the adversary and when they are chosen by the nodes.

Contrary to the previous papers, which assumed that data doesn't suffer collisions when several nodes transmit at the same time, in~\cite{1146398} Chlebus et al. studied stability of some distributed broadcast protocols. However, they assumed a scenario in which the transmission range of each node reaches all the other nodes. The maximum throughput, defined to mean the maximum rate for which stability is achievable, was studied by Chlebus et al.~\cite{DBLP:journals/dc/ChlebusKR09}. Anantharamu et al.~\cite{DBLP:conf/opodis/AnantharamuCR09} extended this work by studying the impact of limiting the adversary by assigning independent rates of injecting data 
to
%for 
each node.

In this paper, we study stability in a scenario formed by a multihop wireless network, where each node has a, possibly different, work-conserving scheduling policy. We say a scheduling policy is {\em work-conserving} if it cannot be idle as long as there is data queued to be transmitted. Many well-known scheduling policies like FIFO (First-In-First-Out), LIS (Longest-In-System), SIS (Shortest-In System), FTG (Farthest-To-Go), NTS (Nearest-To-Source), etc., are work-conserving policies, whereas other policies like Round-Robin, GPS (Generalized Processor Sharing), WFQ (Weighted Fair Queueing), etc., are non-work-conserving.

Our main result shows that a network with nodes following a work-conserving scheduling policy is stable provided the data injection rate is lower than $1/d$, being $d$ the largest number of links that data can cross in the network. Furthermore, we also show that such a bound is asymptotically optimal on $d$.

The rest of the paper is organized as follows. In Section~\ref{sec:model} we introduce our adversarial model and in
Section~\ref{sec:wc} we present the main results about stability and latency
of wireless communication in the specified model.
%In Section~\ref{sec:wc} we provide the stability condition for work-conserving scheduling policies. 

\section{The model}	
\label{sec:model}

We use a modified version of the wireless adversarial model proposed 
by Andrews et al.~\cite{DBLP:journals/talg/AndrewsZ07}. We consider a wireless multihop undirected network of $n$ nodes, where each node acts as both a transmitter and a receiver. When data is transmitted from its source node to its destination node and they are too far away from each other to communicate, data may go through other nodes as intermediate hops. Each node contains a queue for each outgoing link and uses it to store there data to be sent along the corresponding link. We assume that data is fluid-like (in the sense that the unit to transmit can be as small as needed), and that several pieces of data may be transmitted along one link in one time step. Furthermore, we assume that data units don't suffer collisions.\footnote{This can be achieved by making a specific channel assignment based on Time/Frequency/Code division or other methods for resolving contention in the data-link layer.} This feature is similar to the wireline adversarial model, that also doesn't take into account collisions between packets.

Time is divided into fixed slots. Each node can transmit at different  capacities in the interval $[0,1]$, which may or may not vary over time as a result of changing wireless channel conditions. We use $r_{ij}(t)$ to denote the rate at which node~$i$ can transmit to node~$j$ at time slot~$t$, also referred as \emph{transmission rate}. It is assumed that the transmission rate is defined over all pair of nodes, since $r_{ij}(t)$ can be set to zero if nodes~$i$ and~$j$ are too far away from each other to communicate directly. Furthermore, we assume that a node can transmit to only one neighbor at each time step. This is the main feature that distinguishes the wireless adversarial model from the wireline one, in which a node is allowed to transmit to several nodes at the same time step.

The time evolution is seen as a game between a \emph{scheduling queue policy} which decides, at each time step, which data must be transmitted (if any), and a bounded \emph{adversary} that governs both the \emph{data arrivals} and the \emph{channel conditions}, i.e., the transmission rates.

\emph{The adversary.} Regarding the data arrivals, at each time step the adversary injects a set of data into some of the nodes in the network. More 
precisely,
%concretely, 
such an injection is defined by a pair of parameters $(b,r)$, where $b \geq 1$ is a natural number and $r$ 
satisfies
%is 
$0 \leq r < 1$. The parameter $b$ (usually called \emph{burstiness}) models the short bursts of data 
the adversary
%we 
can inject into the network. The parameter $r$ (called the \emph{injection rate}) models the long-term rate at which data can be injected into the network. The adversary is free to choose both the source and the destination node for any injected data. It also specifies the routing path from 
the source to the destination that data must follow. Paths don't include the same link more than once, and data is absorbed after traversing its route. 

The adversary also controls the quality of channels between nodes, trying to create as much trouble for the scheduling policy as possible, by means of specifying the transmission rates. At each time slot and for each node~$i$, the adversary sets up the values of the rate vector $(r_{i1}(t), r_{i2}(t), ..., r_{in}(t))$ before node~$i$ makes its scheduling decision. 
These rates are not know to the scheduling algorithm.

In order for stability to be feasible, it is necessary to impose some restrictions on the adversary so that it 
would not be able to
%shouldn't 
fully load any link a priori. More specifically, we require that the adversary satisfies the following \emph{admissibility condition}. Let $I_{ij}(t)$ represent the total amount of data that the adversary injects at time~$t$ and has link $i j$ on its path. We say that the adversarial injection is 
{\em admissible for rate~$r$ and burst~$b$} if there exist fractions $x_{ij}(t) \in [0,1]$ such that

\begin{equation}
\label{admissible1}
\sum_{j} x_{ij}(t) = 1, \;\; \forall i, \; \forall t
\end{equation}

\begin{equation}
\label{admissible2}
\sum_{t \in T_x} I_{ij}(t) \leq r \sum_{t \in T_x} r_{ij}(t) x_{ij}(t) + b, \;\; \forall ij, \; \forall T_x
\end{equation}
where $T_x$ denotes a consecutive sequence of $x$ time steps. One can 
view
%see 
$x_{ij}(t)$ as representing fractional decisions that indicate the assignment of data injected by the adversary that wishes to pass through node~$i$ at each time step. The admissibility condition of Eq.~\ref{admissible1} (combined with Eq.~\ref{admissible2}) says that the total size of such a data is, on average, at most~$r$.

\emph{Stability.} In order to formally define stability, we denote by $d_p$ the number of links that a data unit $p$ has to cross. Furthermore, we denote by $a_i^{p}$ and $f_i^{p}$ the time instants that~$p$ respectively arrives at and departs from  the $i$th node on its routing path, where $1 \le i \le d_p$. If~$p$ leaves its $i$th link in time step $f_i^{p}$, it will arrive at its $(i+1)$st queue at time step $a_{i+1}^{p} = f_i^{p}$. Finally, we denote by $Q_i^{p}$ the time~$p$  spends in the queue of the $i$th node on its path, i.e.,~$Q_i^{p}=f_i^{p}-a_i^{p}$. Let $Q=\max_{p,1 \le i \le d_p}Q_i^{p}$.

Given an adversary $\mathcal{A}$ (as defined above) and a scheduling protocol $\mathcal{P}$, we say a network $\mathcal{G}$ is \emph{stable} if $Q \leq \infty$~\cite{DBLP:journals/talg/AndrewsZ07}.

%---

\section{Stability Conditions and Latency of Routing with
%for 
Work-Conserving Scheduling Policies}
\label{sec:wc}

In this section, we obtain 
a formula for
%an expression of 
the threshold value on data injection rate guaranteeing stability in wireless networks with work-conserving scheduling policies (i.e., nodes cannot be idle as long as there data queued to be transmitted). Furthermore, we also estimate data latency for injection rates below this threshold value.

%\dk{REVISE: In this section we obtain an expression of the longest time data can wait in any node with any work-conserving scheduling policy (i.e., they cannot be idle as long as there data queued to be transmitted). This expression is then used to derive an upper bound on the injection rate $r$ to guarantee stability.}

We remark that each node may have its own, possibly different, scheduling policy (FIFO, LIFO, Longest-in-System, etc.), as long as they are work-conserving. Furthermore, the scheduling policies don't need to know the quality of the transmission channels (i.e., the values of the rate vectors), since they only take care of deciding the order in which data is transmitted.

The following theorem provides a bound on the injection rate that guarantees network stability under any work-conserving scheduling
policies.
%policy. 

\begin{theorem}
\label{ref:workconservingtheorem}
Any network in which all queues use a, possibly different, work-conserving scheduling policy and data 
are
%is 
injected by a $(b,r)$-adversary, is stable 
for
%if 
$r < \frac{1}{d}$, where $d$ is the largest number of hops that any data unit 
traverses
%can cross 
in the network. 
Furthermore, 
data latency
%the worst-case end-to-end delay 
is bounded from above by $d \frac{b \Delta}{1-r d}$, where $\Delta$ denotes the maximum number of neighbors a node can have.
\end{theorem}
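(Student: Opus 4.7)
The plan is to show that at every node $v$ and every maximal ``busy period'' $[s,e]$ at $v$---a maximal interval during which some outgoing queue of $v$ is non-empty, so that $v$ transmits at every slot by the work-conserving assumption---the length $e-s+1$ is at most $L^{\star} := \frac{b\Delta}{1-rd}$. Because the waiting time $Q_{i}^{p}$ of any data unit $p$ at its $i$-th hop is contained in a busy period at the tail node of that hop, this immediately yields a per-hop delay bound of $L^{\star}$, and summing over the at most $d$ hops along any path gives the stated latency bound $d\cdot L^{\star} = d\frac{b\Delta}{1-rd}$ and hence stability ($Q<\infty$).

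I would establish the busy-period bound by strong induction (equivalently, a minimal-counterexample argument) on the right endpoint $e$ of the busy period. Fix $[s,e]$ at $v$ and assume the bound holds for every strictly earlier busy period at any node. Let $X_{\mathrm{in}}$ be the total amount of data entering $v$'s queues during $[s,e]$ and $X_{\mathrm{out}}$ the total amount served by $v$ during $[s,e]$. Because $v$'s queues are empty just before $s$ and just after $e$, we have $X_{\mathrm{in}}=X_{\mathrm{out}}$; and since $v$ serves at most one outgoing link per slot at rate at most~$1$, $X_{\mathrm{out}}\leq e-s+1$. To bound $X_{\mathrm{in}}$ from above I classify each arriving unit by the hop index $k'\in\{1,\ldots,d\}$ it is performing at $v$: by the inductive hypothesis each of the $k'-1$ prior hops delayed that unit by at most $L^{\star}$, so a unit arriving at $v$ at time $\tau$ at its $k'$-th hop was injected at some time $\geq \tau-(k'-1)L^{\star}$. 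Thus all such injections lie in a window $W:=[s-(d-1)L^{\star},\,e]$ of length at most $(e-s+1)+(d-1)L^{\star}$. Summing the admissibility inequality~(\ref{admissible2}) over $v$'s (at most $\Delta$) outgoing links, and using $\sum_{j} x_{vj}(t)=1$ together with $r_{vj}(t)\leq 1$, yields
\[
X_{\mathrm{in}} \;\leq\; r\bigl((e-s+1)+(d-1)L^{\star}\bigr)+b\Delta.
\]
Combining this with the tight case $e-s+1=X_{\mathrm{out}}=X_{\mathrm{in}}$ and using the identity $1+(d-1)=d$ gives the self-consistent inequality $L^{\star}(1-rd)\leq b\Delta$, i.e., $L^{\star}\leq \frac{b\Delta}{1-rd}$, exactly matching the hypothesized value and closing the induction precisely under the assumption $r<1/d$.

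The main obstacle I anticipate is setting up the induction so that the injection-window estimate legitimately relies only on \emph{earlier} (already-bounded) busy periods and avoids any circularity; the minimal-counterexample choice of $[s,e]$ serves this purpose, since every preceding hop of any data unit reaching $v$ during $[s,e]$ occurred in a strictly earlier busy period whose length is already bounded by $L^{\star}$. A secondary subtlety is the careful accounting of the factors $\Delta$ and $d$: the $\Delta$ comes from summing admissibility over the at most $\Delta$ outgoing links of $v$ (giving $b\Delta$), while the $d$ in the denominator arises from summing the $(k'-1)L^{\star}$ lookback terms across the $d$ possible hop indices (giving $rdL^{\star}$). These combine to yield exactly the $b\Delta/(1-rd)$ bound announced in the theorem.
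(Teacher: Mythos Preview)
Your overall strategy---bounding per-hop delay by a quantity $L^\star=\frac{b\Delta}{1-rd}$ and then summing over $d$ hops---matches the paper's in spirit, but your inductive mechanism differs from the paper's and has a real gap.

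\textbf{The induction does not close.} You induct on the right endpoint $e$ of a busy period at $v$ and assert that ``every preceding hop of any data unit reaching $v$ during $[s,e]$ occurred in a strictly earlier busy period whose length is already bounded by $L^\star$.'' This is not justified. A data unit arriving at $v$ at time $\tau\in[s,e]$ departed its previous node $u$ at time $\tau$; the busy period at $u$ containing that departure certainly covers $\tau$, but its right endpoint $e_u$ may well satisfy $e_u\ge e$ (node $u$ could remain busy long after $v$'s busy period ends). In that case the minimal-counterexample hypothesis on busy periods with right endpoint $<e$ says nothing about the length of $u$'s busy period, and hence nothing about the delay of that data unit at $u$. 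So the lookback window $[s-(d-1)L^\star,e]$ is not established. The paper sidesteps this entirely: it takes $Q=\max_{p,i}Q_i^p$ directly, so every preceding-hop delay is automatically $\le Q$ by definition of the maximum, and one obtains a single self-referential inequality $Q\le r(d-1)Q+r(Q-1)+|N(i)|\,b$, i.e.\ $Q\le rdQ+b\Delta-r$, which is solved for $Q$ in one step. No induction, no ordering of busy periods, no circularity to worry about.

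\textbf{A secondary issue.} Your passage ``combining this with the tight case $e-s+1=X_{\mathrm{out}}=X_{\mathrm{in}}$'' goes the wrong way. You have $X_{\mathrm{out}}\le e-s+1$ (rates are at most $1$) and $X_{\mathrm{in}}=X_{\mathrm{out}}$, so your upper bound on $X_{\mathrm{in}}$ gives an upper bound on $X_{\mathrm{out}}$, not on $e-s+1$; when service rates are small the busy period can be much longer than the amount of data served. The paper's accounting ties the time window directly to the capacity terms $\sum_j\sum_t r_{ij}(t)x_{ij}(t)$ appearing in the admissibility condition and then uses $\sum_j x_{ij}(t)=1$ to convert those sums into lengths of time intervals; your $X_{\mathrm{in}}/X_{\mathrm{out}}$ bookkeeping loses that link. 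If you want to rescue the inductive route, you would need to induct on per-hop delays (ordered by departure time) rather than on busy-period endpoints, and to carry the rate-weighted capacity terms through the inequality rather than collapsing to raw data amounts.
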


\begin{proof}
The proof has two parts. First, we 
show
%prove 
that if $r < \frac{1}{d}$ then the maximum time interval data takes to cross any link is bounded, which implies stability. Second, we prove that
%, if the first part is true, then worst-case end-to-end delay 
data latency
is also upper bounded 
%above 
by $d \frac{b \Delta}{1-rd}$, provided the first condition on stability $r<\frac{1}{d}$ holds.

In what follows, we denote as $N(i)$ the set of nodes that are neighbors of node~$i$. We also note that $d=max_{p}\{d_p\}$.

\emph{Remark~1:} Note that we don't assume, a priori, whether the scenario formed by the network, the scheduling policy, and the adversary, is stable or not. Thus, if it is unstable, the time~$p$ takes to leave its $i$th queue could be infinite (i.e., $f_i^{p} = \infty$).

\emph{Remark~2:} Note that if $f_i^{p} = \infty$ (for some~$p$) then $Q=\infty$. However, we base our proof of finding under which conditions, $Q < \infty$ (which will automatically imply $f_i^{p} < \infty$).

Part~(1):  Let $p$ be a data unit that attains the maximum $Q$ (i.e., $Q_i^p =Q$) at the $i$th node on its path. We will call the queue in this node the {\em $i$th queue of data~$p$}.

Let $t_B$ be the oldest time step such that (1) $t_B < a_i^p$, and (2) in every step in $(t_B,a_i^p]$ the $i$th queue is non-empty. Hence, we have that during the interval $(t_B,f_i^p]$ the $i$th queue is non-empty.

Define $\phi_i^{p}$ as the set formed by all data units served by the $i$th queue during the interval $(t_B,f_i^p]$, and let $p^{*}$ be the oldest data unit in $\phi_i^{p}$ (i.e., $\forall p' \in \phi_i^{p}\; (a_1^{p'}\geq a_1^{p^{*}})$).  Hence, by the definition of $p^{*}$, all data in $\phi_i^{p}$ must have been injected during the interval $[a_1^{p^{*}}, f_i^{p}]$. 
%\db{remember that data is injected instantaneously at their ingress nodes).}

%\dk{TO BE REVISED - last sencence not clear: \db{remove last sentence, marked in blue}}
Based on the above mentioned scenario and on the definition of the adversarial model, $Q_i^{p}=f_i^{p}-a_i^{p}$ 
is
%will be 
bounded by the maximum number of data units injected 
%(by the adversary) 
during the interval $[a_1^{p^{*}},f_i^p - 1]$ 
(i.e., the worst-case scenario is: where all data injected 
since the
%from 
time instant $a_1^{p^{*}}$ until~$p$ is served, cross the $i$th node of $p$ and is scheduled 
before~$p$) 
minus the data served by the $i$th queue of $p$ during the interval $[t_B,a_i^p]$. 
Recall that in each step in the period $[t_B,a_i^p]$ the $i$th queue of $p$ is non-empty.
%(which is a busy period).
%
We have
\begin{eqnarray*}
%\begin{equation}
%\nonumber
%\begin{split}
%& 
&&f_i^{p}-a_i^{p} \\
&\leq&  
\hspace*{-1em} \sum_{j \in N(i)} \Big( r \sum_{t=a_1^{p^*}}^{f_i^p - 1}  r_{ij}(t) x_{ij}(t) + b\Big) - \sum_{j \in N(i)} \sum_{t=t_B}^{a_i^p} r_{ij}(t) x_{ij}(t)  \\
&=&  
\hspace*{-1em} \sum_{j \in N(i)} \Big(r \sum_{t=a_1^{p^*}}^{t_B - 1}  r_{ij}(t) x_{ij}(t) + r  \sum_{t=t_B}^{a_i^p}  r_{ij}(t) x_{ij}(t) \ + \\
& & 
\hspace*{-1em} r \sum_{t=a_i^p + 1}^{f_i^p -1}  r_{ij}(t) x_{ij}(t) + b - \sum_{t=t_B}^{a_i^p} r_{ij}(t) x_{ij}(t)\Big)
\end{eqnarray*}

Now, taking into account that  $r \leq 1$, we have
\begin{eqnarray*}
&&f_i^{p}-a_i^{p} \\
&\leq& 
\hspace*{-1em} \sum_{j \in N(i)} \Big(r  \sum_{t=a_1^{p^*}}^{t_B - 1}  r_{ij}(t) x_{ij}(t) +  r \sum_{t=a_i^p + 1}^{f_i^p -1}  r_{ij}(t) x_{ij}(t)+ b\Big)
\end{eqnarray*}

and taking also into account that $r_{ij} \leq 1$, we finally obtain
\begin{eqnarray*}
f_i^{p}-a_i^{p} &\leq& 
\hspace*{-1em} \sum_{j \in N(i)} \Big(r  \sum_{t=a_1^{p^*}}^{t_B - 1}  x_{ij}(t) + r \sum_{t=a_i^p + 1}^{f_i^p -1}  x_{ij}(t)+ b\Big) 
%\end{split}
%\end{equation}
\end{eqnarray*}

%The second step follows from the fact that $a_1^{p^*} < t_B$, the third step follows from the fact that $r \leq 1$, and the fourth step follows from the fact that $r_{ij} \leq 1$.

Let $k$ be the hop number of $p^*$ when it arrives to the node where $p$ attains the maximum $Q$. Taking into account the first admissibility condition (Eq.~(\ref{admissible1})) we have that  $\sum_{j \in N(i)} x_{ij}(t) = 1$ for all~$t$, where $x_{ij}(t) \in [0,1]$. Therefore, 
\begin{eqnarray*}
%\begin{equation}
%\nonumber
%\begin{split}
%& 
&& f_i^p - a_i^p  \\
&\leq& 
\hspace*{-0.5em} r (t_B - a_1^{p^*})  + r (f_i^p - a_i^p -1)  + |N(i)|\cdot b  \\
&=& 
\hspace*{-0.5em} r (t_B - a_k^{p^*} + a_k^{p^*} - a_1^{p^*})  + r (f_i^p - a_i^p -1)  + |N(i)|\cdot b  \\
&=& 
\hspace*{-0.5em} r (t_B - a_k^{p^*}) + r (a_k^{p^*} - a_1^{p^*})  + r (f_i^p - a_i^p -1)  + \\
&& \hspace*{-0.5em} |N(i)|\cdot b
\end{eqnarray*}

Since $a_k^{p^*} \geq t_B$, then we have
\begin{eqnarray*}
f_i^p - a_i^p &\leq& 
\hspace*{-0.5em} r (a_k^{p^*} - a_1^{p^*})  + r (f_i^p - a_i^p -1)  + |N(i)|\cdot b  
%\end{split}
%\end{equation}
\end{eqnarray*}

Since  $f_i^p - a_i^p=Q_i^p=Q$ and $a_k^{p^*}-a_1^{p^*} \leq (d-1)Q$,
and taken into account that $Q$ is the maximum time a data unit takes to cross a link, and $d-1$ is the maximum number of links a data unit crosses until reaching its last queue, we have that
\begin{equation}
\nonumber
\begin{split}
Q  &  \leq r Q (d-1) + r (Q -1)  + |N(i)|\cdot b  \\
Q & \leq r Q  d -r + |N(i)|\cdot b
\end{split}
\end{equation}
It follows that $Q < \infty$ for $r < 1/d$.
%Then, if $r < 1/d$ we obtain that $Q < \infty$.

Part~(2): Consider a data unit $p$ that traverses a path with $d_p$ hops, where $d_p \leq d$. This network satisfies the 
property delivered in
%assumption of 
Part~(1). 
%\dk{NOT CLEAR: }
%\db{remove the sentence: ``since there is only a finite data at the input.''} 
Call $\delta_p$ 
%$\delta$
the 
%worst-case end-to-end delay 
latency
of~$p$ and let $\Delta = \max_i |N(i)|$. 
%for all~$i$. 
From the above derivation we see that 
$\delta_p \leq d_p Q \leq d \frac{b \Delta -r}{1-r d} \leq  d \frac{b \Delta}{1-r d}$. \\
\end{proof}

\section{Tightness of the bounds.}

In~\cite{DBLP:journals/ton/BennettBCCB02}, Bennet et al. introduce a family of networks  intended to provide stability bounds in the \emph{wireline model} (we refer to the ``standard'' adversarial queueing model~\cite{aafkll01} for wireline networks). 

We denote as \emph{scenario} the combination of a concrete network and a concrete adversarial strategy.

\subsection{Description of the scenario in~\cite{DBLP:journals/ton/BennettBCCB02}}
\label{subsec:scenario}

The structure of the family of networks used in~\cite{DBLP:journals/ton/BennettBCCB02} is illustrated in Figure~1 (here, we present a slightly modified description of an equivalent network).

\paragraph{Network topology:}

The network is formed by a collection of identical building blocks, arranged in a tree structure of depth~$J$. At each building block there are $h-1$ nodes (where $h \geq 3$ represents the maximum number of hops a packet can traverse), each node having only \emph{one} queue.  Within each building block each node has $kh$ external inputs (i.e., coming from nodes located in some other building blocks in the lower level) and one internal input (i.e., coming from the preceding node of the same building blocks), except for the first one that has no internal input.

Furthermore, each node has one internal output that is connected to the subsequent node's internal input, except for the last one which forms the building block output which is connected to the external input of some node in a building block of higher level.

\paragraph{Traffic description:}

Every building block has one internal source of traffic called \emph{transit traffic} which, after traversing one internal node, feeds the next internal node, except for the last one which feeds the building block output. Also, each internal node is fed with $kh$ external sources of traffic called \emph{building block inputs} which, after traversing that internal node, is absorbed at the subsequent node or at the first node of some other building block in the upper level\footnote{At this point, we note that in~\cite{DBLP:journals/ton/BennettBCCB02} the authors say that \emph{traffic from the building block input dies in a data sink}. However, it is equivalent to say that they are absorbed after traversing one internal node (i.e., they are absorbed at the subsequent node or at the first node of some other building block in the upper level).}.

\begin{figure*}
\begin{center}
\subfigure[Structure of a single building block.]{\includegraphics[scale=0.7]{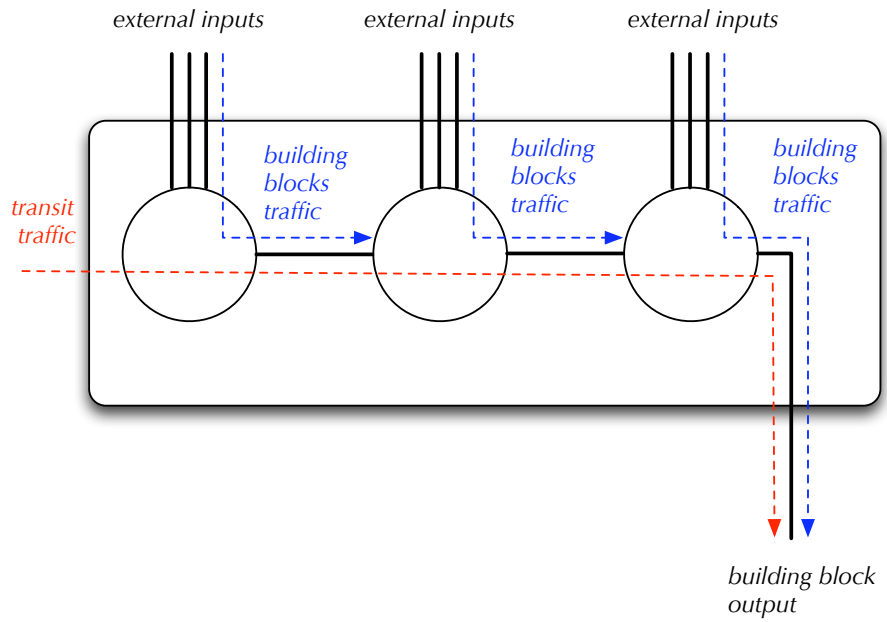}}
\label{aaa}
\subfigure[The network made of building blocks. For clarity, some links corresponding to the non-colored building blocks have been omitted.]{\includegraphics[scale=0.7]{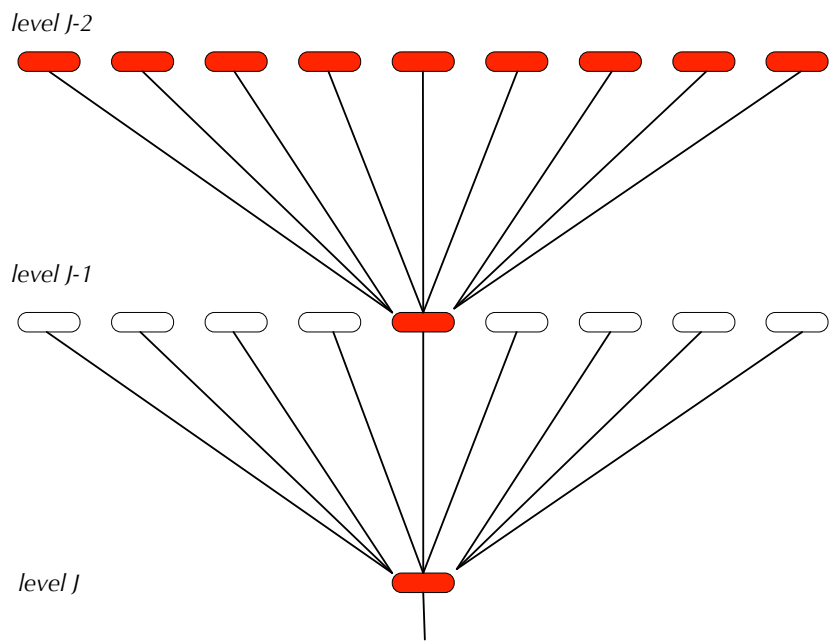}}
\label{bbb}
\caption{Network used in~\cite{DBLP:journals/ton/BennettBCCB02}.}
\label{fig:aaa}
\end{center} 
\end{figure*}

\subsection{Simulating the scenario in~\cite{DBLP:journals/ton/BennettBCCB02} in the wireless case}
\label{wireless:scenario}

In this section, we show how to simulate the scenario in~\cite{DBLP:journals/ton/BennettBCCB02} in the wireless model:

\begin{description} 
\item{S.1}
Each node uses a FIFO scheduling policy.

\item{S.2}
We use the same network topology as in~\cite{DBLP:journals/ton/BennettBCCB02}. That is, we use the same nodes and assume two nodes can exchange packets directly if they are connected in~\cite{DBLP:journals/ton/BennettBCCB02}.

\item{S.3}
The adversary in the wireless scenario sets up permanently the transmission rates to~$1$ for the nodes that  can exchange packets (as explained in S.2) and~$0$ for the remaining nodes.

\item{S.4}
The adversarial strategy of injecting packets and choosing their paths is exactly the same as in~\cite{DBLP:journals/ton/BennettBCCB02}.

\end{description}

%If we also take into account that (1) at each \emph{building block}, a node can communicate directly with its subsequent node according to the \emph{transit traffic} flow (i.e., each node can communicate with the next internal node) and (2) the last node at each \emph{building block} feeds the \emph{building block output}, and can communicate with the node it feeds at the next level, we have the following fact:

In view of the network and traffic specification in Section~\ref{subsec:scenario} and by simulation assumptions S.2 and S.4, the following property holds:
\begin{description}
\item{Fact~1:}
Both in the scenario in~\cite{DBLP:journals/ton/BennettBCCB02} and the above described wireless scenario, each node feeds to  only  one node in the whole network.
\end{description}

\subsection{Summary of differences between the wireline scenario of~\cite{DBLP:journals/ton/BennettBCCB02} and the wireless model as described in Section~\ref{wireless:scenario}}

What distinguishes the collision-free wireless model from the wireline model of~\cite{DBLP:journals/ton/BennettBCCB02} is that, in the wireless model:
\begin{description}
\item{D.1}
A node feeds to only one neighbor at each time step.
\item{D.2}
The adversary can dynamically set up the transmission rates for the links.
\item{D.3}
The admissibility conditions are different (due to differences D.1 and D.2).
\end{description}

Let us analyze these differences in the contexts of the wireline scenario described in~\cite{DBLP:journals/ton/BennettBCCB02} and its simulation in the wireless model as described in Section~\ref{wireless:scenario}:
\begin{itemize}
\item
From Fact~1 we have that, in both scenarios, a node feeds to only one node in the whole network. So, there is no difference in both scenarios regarding D.1.
\item
Taking into account S.3, both scenarios are the same regarding D.2.
\item
Since the general differences D.1 and D.2 do not hold in the considered wireline and wireless scenarios, the admissibility conditions~(1) and~(2) in the wireless scenario are equivalent to the admissibility condition in the wireline scenario. Then,  there is no difference in both scenarios regarding D.3.
\end{itemize}

Therefore, we can conclude that the scenario in~\cite{DBLP:journals/ton/BennettBCCB02}  and the scenario introduced in Section~\ref{wireless:scenario} behave {\bf exactly} the same.

\subsection{Analysis of the stability bound}

In~\cite{DBLP:journals/ton/BennettBCCB02} it has been shown that if $r > 1/(d-1)$ a packet delay can be made larger than any fixed but arbitrary delay bound. This implies that any bound for stability must be lower or equal to  $1/(d-1)$, both in the wireline and the wireless models.  Let's refer to this bound as the \emph{optimistic} bound.

%However, we can simulate the behavior of such a network in a wireless scenario by using the same iterative steps as in~\cite{DBLP:journals/ton/BennettBCCB02,410/LCA} to create the required traffic of data, and making the adversary to properly choose the transmission rates of each node at each time step.
%%\footnote{Note that we have assumed that data is fluid-like and data units don't suffer collisions.} that makes the wireless network unstable when $r > 1/(d-1)$. 

Now, we define a parameter $\epsilon(d)$ measuring the difference between such an \emph{optimistic} stability bound (i.e., $1/(d-1)$) and the bound provided in Theorem~1 (i.e., $1/d$) when we increase $d$. We have that it behaves like $\epsilon(d) \sim \frac{1}{d-1} - \frac{1}{d} \sim \frac{1}{d^2}$. Therefore, we have that our bound in Theorem~1 is asymptotically optimal within that limit.

%\dk{In~\cite{410/LCA}, Charny and Le Boudec shown that, in the \emph{wireline model} (here, we refer to the ``standard'' adversarial queueing model~\cite{aafkll01} for wireline networks), there exists a large enough network which considers constant rate links and the FIFO scheduling policy, that becomes unstable when $r > 1/(d-1)$.  However, we can simulate the behavior of such a network in a wireless scenario by using the same iterative steps as in~\cite{410/LCA} to create the required traffic of data, and making the adversary to properly choose the transmission rates of each node at each time step.
%%\footnote{Note that we have assumed that data is fluid-like and data units don't suffer collisions.} that makes the wireless network unstable when $r > 1/(d-1)$. 
%This implies that any bound for stability must be lower or equal to  $1/(d-1)$. Let's refer to this as the \emph{optimistic} bound.}

%\dk{Now, we define a parameter $\epsilon(d)$ measuring the difference between such an optimistic stability bound (i.e., $1/(d-1)$) and the bound provided in Theorem~\ref{ref:workconservingtheorem} (i.e., $1/d$) when we increase $d$. We have that it behaves like $\epsilon(d) \sim \frac{1}{d-1} - \frac{1}{d} \sim \frac{1}{d^2}$. Therefore, we have that our bound in Theorem~\ref{ref:workconservingtheorem} is asymptotically optimal within that limit.}

\section{Future work}
As a future work, we note that an interesting open question is to analyze the behavior of the system when the channel conditions are not fully controlled by the adversary but they fulfill some specific constrains.
\bibliography{wireless_adv_full}
\bibliographystyle{IEEEtran.bst}

\end{document}